\newtheorem{lemma}{Lemma} 
\newtheorem{theorem}{Theorem}
\newtheorem{corollary}[theorem]{Corollary}
\newtheorem{proposition}[lemma]{Proposition} 
\theoremstyle{definition}
\theoremstyle{remark}
\newcommand*{\defeq}{\mathrel{\vcenter{\baselineskip0.5ex \lineskiplimit0pt
                     \hbox{\scriptsize.}\hbox{\scriptsize.}}}
                     =}
\DeclareMathOperator*{\argmax}{arg\,max}
\newcommand{\ol}{\overline}
\newcommand{\dashover}[2][\mathop]{#1{\mathpalette\df@over{{\dashfill}{#2}}}}
\newcommand{\df@over}[2]{\df@@over#1#2}
\newcommand\df@@over[3]{%
  \vbox{
    \offinterlineskip
    \ialign{##\cr
      #2{#1}\cr
      \noalign{\kern1pt}
      $\m@th#1#3$\cr
    }
  }%
}
\newcommand{\dashfill}[1]{%
  \kern-.5pt
  \xleaders\hbox{\kern.5pt\vrule height.4pt width \dash@width{#1}\kern.5pt}\hfill
  \kern-.5pt
}
\newcommand{\dash@width}[1]{%
  \ifx#1\displaystyle
    2pt
  \else
    \ifx#1\textstyle
      1.5pt
    \else
      \ifx#1\scriptstyle
        1.25pt
      \else
        \ifx#1\scriptscriptstyle
          1pt
        \fi
      \fi
    \fi
  \fi
}
\newcommand{\dx}[1]{\ol{#1}}   
\newcommand{\px}[1]{\dashover{#1}}   
\newcommand{\Scal}{\mathcal{S}}
\newcommand{\Xcal}{\mathcal{X}}
\newcommand{\Ycal}{\mathcal{X}_1}
\newcommand{\Zcal}{\mathcal{X}_2}
\newcommand{\gCI}{CI} 
\newcommand{\gSI}{SI} 
\newcommand{\gUI}{UI} 
\newcommand{\dUI}{\ol{UI}^*}
\newcommand{\dCI}{\ol{CI}^*}
\newcommand{\oCI}{\widetilde{CI}}
\newcommand{\oUI}{\widetilde{UI}}
\newcommand{\oSI}{\widetilde{SI}}
\author{
	\IEEEauthorblockN{Johannes Rauh\IEEEauthorrefmark{1}, Pradeep Kr. Banerjee\IEEEauthorrefmark{1}, Eckehard Olbrich\IEEEauthorrefmark{1}, J{\"u}rgen Jost\IEEEauthorrefmark{1}, and Nils Bertschinger\IEEEauthorrefmark{2}}\\
	\IEEEauthorblockA{\IEEEauthorrefmark{1}Max Planck Institute for Mathematics in the Sciences, Leipzig, Germany
		\\\{jrauh,pradeep,olbrich,jjost\}@mis.mpg.de}\\
	\IEEEauthorblockA{\IEEEauthorrefmark{2}Frankfurt Institute for Advanced Studies, Frankfurt, Germany
		\\bertschinger@fias.uni-frankfurt.de}
}
\title{On extractable shared information}
\begin{document}
\maketitle

\begin{abstract}
 We consider the problem of quantifying the information shared by a pair of random variables~$X_{1},X_{2}$ about another variable~$S$. We propose a new measure of shared information, called \emph{extractable shared information}, that is left monotonic; that is, the information shared about~$S$ is bounded from below by the information shared about~$f(S)$ for any function~$f$.
  We show that our measure leads to a new nonnegative decomposition of the mutual information~$I(S;X_1X_2)$ into shared, complementary and unique components.
  We study properties of this decomposition and show that a left monotonic shared information is not compatible with a Blackwell interpretation of unique information.
  We also discuss whether it is possible to have a decomposition in which both shared and unique information are left monotonic.
\end{abstract}

Keywords: Information decomposition; multivariate mutual information; left monotonicity; Blackwell order



\section{Introduction}

A series of recent papers have focused on the bivariate information decomposition problem~\cite{BROJA13:Quantifying_unique_information,RBOJ14:Reconsidering_unique_information,WilliamsBeer:Nonneg_Decomposition_of_Multiinformation,HarderSalgePolani2013:Bivariate_redundancy,BROJ13:Shared_information,GriffithKoch2014:Quantifying_Synergistic_MI}.  Consider three random variables $S$, $X_1$, $X_2$ with finite alphabets $\Scal$, $\Ycal$ and $\Zcal$, respectively.  The total information that the pair $(X_1,X_2)$ convey about the target $S$ can have aspects of \emph{shared} or \emph{redundant} information (conveyed by both $X_1$ and $X_2$), of \emph{unique} information (conveyed exclusively by either $X_1$ or $X_2$), and of \emph{complementary} or \emph{synergistic} information (retrievable only from the the joint variable $(X_1,X_2)$). In general, all three kinds of information may be present concurrently. One would like to express this by decomposing  the mutual information $I(S;X_1X_2)$ into a sum of nonnegative components with a well-defined operational interpretation. 
One possible application area is in the neurosciences. 
In~\cite{Wibral2015}, it is argued that such a decomposition can provide a framework to analyze neural information processing using information theory that can integrate and go beyond previous attempts.

For the general case of $k$ finite source variables $(X_{1},\dots,X_{k})$, Williams and Beer~\cite{WilliamsBeer:Nonneg_Decomposition_of_Multiinformation} proposed the partial information lattice framework that specifies how the total information about the target $S$ is shared across the singleton sources and their disjoint or overlapping coalitions.
The lattice is a consequence of certain natural properties of shared information (sometimes called the \emph{Williams--Beer axioms}).
In the bivariate case ($k=2$), the decomposition has the form
\begin{align}
I(S;X_1X_2) &= \underbrace {{\gSI}(S;X_1,X_2)}_{\text{shared}} + \underbrace {\gCI(S; X_1,X_2)}_{\text{complementary}} + \underbrace {\gUI(S;X_1\backslash X_2)}_{\text{unique ($X_1$ wrt $X_2$)}} + \underbrace {\gUI(S;X_2\backslash X_1)}_{\text{unique ($X_2$ wrt $X_1$)}}, \label{eq:bivariate1}\\
I(S;X_1) &= \gSI(S;X_1,X_2) + \gUI(S;X_1\backslash X_2)  \label{eq:bivariate2},\\
I(S;X_2) &= \gSI(S;X_1,X_2) + \gUI(S;X_2\backslash X_1)  \label{eq:bivariate3},
\end{align}
where $\gSI(S;X_{1},X_{2})$, $\gUI(S;X_{1}\backslash X_{2})$, $\gUI(S;X_{2}\backslash X_{1})$, and $\gCI(S;X_{1},X_{2})$ are nonnegative functions that depend continuously on the joint distribution of $(S,X_{1},X_{2})$. 
The difference between shared and complementary information is the familiar co-information \cite{Bell2003} (or interaction information \cite{McGill1954}), a symmetric generalization of the mutual information for three variables, 
\begin{align*} 
CoI(S;X_1,X_2)&=I(S;X_1)-I(S;X_1|X_2)=\gSI(S;X_1,X_2)-\gCI(S; X_1,X_2). 
\end{align*}
{Equations} \eqref{eq:bivariate1} to~\eqref{eq:bivariate3} leave only a single degree of freedom, i.e., it suffices to specify either a measure for $\gSI$, for $\gCI$ or for~$\gUI$.  

Williams and Beer not only introduced the general partial information framework, but also proposed a measure of $SI$ to
fill this framework.  While their measure has subsequently been criticized for ``not measuring the right thing''
\cite{BROJ13:Shared_information,HarderSalgePolani2013:Bivariate_redundancy,GriffithKoch2014:Quantifying_Synergistic_MI},
there has been no successful attempt to find better measures, except for the bivariate case
$(k=2)$~\cite{HarderSalgePolani2013:Bivariate_redundancy,BROJA13:Quantifying_unique_information}.  One problem seems to
be the lack of a clear consensus on what an ideal measure of shared (or unique or complementary) information should look
like and what properties it should satisfy. In particular, the Williams--Beer axioms only put crude bounds on the values
of the functions~$\gSI$, $\gUI$ and~$\gCI$.  Therefore, additional axioms have been proposed by various
authors~\cite{BROJ13:Shared_information,HarderSalgePolani2013:Bivariate_redundancy,GriffithKoch2014:Quantifying_Synergistic_MI}.
Unfortunately, some of these properties contradict each other~\cite{BROJ13:Shared_information}, and the question for
the right axiomatic characterization is still open.

The Williams--Beer axioms do not say anything about what should happen when the target variable~$S$ undergoes a local transformation. 
In this context, the following \emph{left monotonicity} property was proposed in~\cite{BROJ13:Shared_information}:
\begin{itemize}[leftmargin=1cm]
	\item[\textbf{({LM})}] ${\gSI}(S;X_1,X_2) \ge {\gSI}(f(S);X_1,X_2)$ for any function~$f$.
	\hfill \phantom{asdf} \hfill\emph{(left monotonicity)}
\end{itemize}
Left monotonicity for unique or complementary information can be defined similarly.
The property captures the intuition that shared information should only decrease if the target performs some \emph{local} operation (e.g., coarse graining) on her variable $S$.  
As argued in~\cite{RBOJ14:Reconsidering_unique_information}, left monotonicity of shared and unique information are indeed desirable properties. Unfortunately, none of the measures of shared information proposed so far satisfy left monotonicity. 

In this contribution, we study a construction that enforces left monotonicity.  Namely, given a measure of shared information~$\gSI$, define
\begin{equation}
\label{eq:ext-SI}
\ol{\gSI}(S;X_1,X_2) := \sup_{f:\Scal\to\Scal'}\;{\gSI}(f(S);X_1,X_2),
\end{equation}
where the supremum runs over all functions $f:\Scal\to\Scal'$ from the domain of~$S$ to an arbitrary finite set~$\Scal'$.
By construction, $\ol\gSI$ satisfies left monotonicity, and $\ol\gSI$ is the smallest function bounded from below by~$\gSI$ that satisfies left monotonicity.

Changing the definition of shared information in the information decomposition
framework {Equations} \eqref{eq:bivariate1}--\eqref{eq:bivariate3} leads to new definitions of unique and complementary information:
\begin{align*}
\dUI(S;X_1\backslash X_2) & := I(S;X_{1}) - \ol{\gSI}(S;X_1,X_2), \\
\dUI(S;X_2\backslash X_1) & := I(S;X_{2}) - \ol{\gSI}(S;X_1,X_2), \\
\dCI(S;X_1,X_2) & := I(S;X_{1}X_{2}) - \ol{\gSI}(S;X_1,X_2)- \dUI(S;X_1\backslash X_2) - \dUI(S;X_2\backslash X_1).
\end{align*}
In general, $\dUI(S;X_{1}\setminus X_{2})\neq\ol{\gUI}(S;X_{1}\setminus X_{2}) :=
\sup_{f:\Scal\to\Scal'}\;{\gUI}(f(S);X_1\setminus X_2)$.  Thus, our construction cannot enforce left monotonicity for
both $UI$ and $SI$ in parallel.

Lemma~\ref{lem:bivariate-decomposition} shows that $\ol\gSI$, $\dUI$ and $\dCI$ are nonnegative and thus define a
nonnegative bivariate decomposition.
We study this decomposition in Section~\ref{sec:extr-SI}.  In Theorem~\ref{thm:dui-no-Blackwell}, we show that our construction is not compatible with a decision-theoretic interpretation of unique information proposed in~\cite{BROJA13:Quantifying_unique_information}.  In Section~\ref{sec:lm-ids}, we ask whether it is possible to find an information decomposition in which both shared and unique information measures are left monotonic. Our construction cannot directly be generalized to ensure left monotonicity of two functions simultaneously. Nevertheless, it is possible that such a decomposition exists, and in Proposition~\ref{prop:lm-decomposition}, we prove bounds on the corresponding shared information~measure.

Our original motivation for the definition of $\ol\gSI$ was to find a bivariate decomposition in which the shared information satisfies left monotonicity. However, one could also ask whether left monotonicity is a required property of shared information, as put forward in~\cite{RBOJ14:Reconsidering_unique_information}.
In contrast, \cite{HarderSalgePolani2013:Bivariate_redundancy} argue that redundancy can also arise by means of a mechanism.
Applying a function to $S$ corresponds to such a mechanism that singles out a certain aspect from~$S$.  Even if all the $X_i$ share nothing about the whole $S$, they might still share information about this aspect of~$S$, which means that the shared information will increase. With this intuition, we can interpret $\ol\gSI$ not as an improved measure of shared information, but as a measure of \emph{extractable shared information}, because it asks for the maximal amount of shared information that can be extracted from $S$ by further processing $S$ by a local mechanism.  More generally, one can apply a similar construction to arbitrary information measures. We explore this idea in Section~\ref{sec:extr-IM} and discuss probabilistic generalizations and relations to other information measures.
In Section~\ref{sec:exmpls}, we apply our construction to existing measures of shared information.

\section{Properties of Information Decompositions}
\label{sec:Prelim}

\subsection{The Williams--Beer Axioms}

Although we are mostly concerned with the case~$k=2$, let us first recall the three axioms that Williams and
Beer~\cite{WilliamsBeer:Nonneg_Decomposition_of_Multiinformation} proposed for a measure of shared information for arbitrarily many arguments:
\begin{itemize}
	\item[\textbf{({S})}] $\gSI(S;X_1,\ldots,X_k)$ is symmetric under permutations of $X_{1},\dots,X_{k}$,   \hfill\emph{(Symmetry)}
	\item[\textbf{(SR)}] $\gSI(S;X_1) = I(S;X_1)$, \hfill\emph{(Self-redundancy)}
	\item[\textbf{(M)}] $\gSI(S;X_1,\ldots,X_{k-1},X_{k}) \le \gSI(S;X_1,\ldots,X_{k-1})$,\\ with equality if
	$X_{i}=f(X_{k})$ for some~$i<k$ and some function~$f$.  \hfill\emph{(Monotonicity)}
\end{itemize}
Any measure of $\gSI$ satisfying these axioms is nonnegative.
Moreover, the axioms imply the following:
\begin{itemize}[leftmargin=1cm]
	\item[\textbf{(RM)}] $\gSI(S;X_1,\dots,X_k) \ge \gSI(S;f_{1}(X_1),\dots,f_{k}(X_k))$ for all
	functions~$f_{1},\dots,f_{k}$. \hfill\emph{(right monotonicity)}
\end{itemize}

Williams and Beer also defined a function
\begin{equation}
\label{eq:Imin}
I_{\min}(S;X_{1},\dots,X_{k}) = \sum_{s}P_S(s)\text{ }\min_i\Big\{\sum_{x_{i}}P_{X_i|S}(x_i|s)\log\frac{P_{S|X_i}(s|x_{i})}{P_S(s)}\Big\}
\end{equation}
and showed that $I_{\min}$ satisfies their axioms.

\subsection{The \textsc{Copy} example and the Identity Axiom}

Let $X_{1},X_{2}$ be independent uniformly distributed binary random variables, and consider the copy function
$\textsc{Copy}(X_1,X_2):=(X_{1},X_{2})$. One point of criticism of $I_{\min}$ is the fact that
$X_{1}$ and $X_{2}$ share $I_{\min}(\textsc{Copy}(X_1,X_2);X_{1},X_{2}) = \SI{1}{bit}$ about
$\textsc{Copy}(X_{1},X_{2})$ according to~$I_{\min}$, even though they are independent. 
\cite{HarderSalgePolani2013:Bivariate_redundancy} argue that the shared information about the copied pair should equal the mutual information:
\begin{itemize}
	\item[\textbf{({Id})}] $\gSI(\textsc{Copy}(X_1,X_2);X_1,X_2)=I(X_1;X_2)$.  \hfill\emph{(Identity)}
\end{itemize}
Ref. \cite{HarderSalgePolani2013:Bivariate_redundancy} also proposed a bivariate 
measure of shared information that satisfies
\textbf{({Id})}.  Similarly, the measures of bivariate shared information proposed in
\cite{BROJA13:Quantifying_unique_information} satisfies \textbf{(Id)}.
However, \textbf{(Id)} is incompatible with a nonnegative information decomposition according to the Williams--Beer axioms for~$k\ge 3$ \cite{RBOJ14:Reconsidering_unique_information}.

On the other hand, Ref. \cite{BROJ13:Shared_information} uses an example from game theory to give an intuitive explanation how even independent variables $X_1$ and $X_2$ can have nontrivial shared information.  However, in any case the value of $\SI{1}{bit}$ assigned by~$I_{\min}$ is deemed to be too large.

%
\subsection{The Blackwell property and property~\textbf{($\ast$)}}

One of the reasons that it is so difficult to find good definitions of shared, unique or synergistic information is that a clear operational idea behind these notions is missing.  Starting from an operational idea about decision problems,
Ref. \cite{BROJA13:Quantifying_unique_information} proposed the following property for the unique information, which we now
propose to call \emph{Blackwell property}:
\begin{itemize}[leftmargin=.9cm]
	\item[\textbf{(BP)}] For a given joint distribution $P_{SX_1X_2}$, $\gUI(S;X_{1}\backslash X_{2})$ vanishes if and only if there exists a random
	variable $X_1'$ such that $S-X_2-X_1'$ is a Markov chain and $P_{SX_1'}=P_{S{X_1}}$.
	\hfill\mbox{\emph{(Blackwell property)}}
\end{itemize}
In other words, the channel $S \to X_{1}$ is a \emph{garbling} or \emph{degradation} of the channel $S \to X_{2}$.  Blackwell's
theorem~\cite{Blackwell1953} implies that this garbling property is equivalent to the fact that any decision
problem in which the task is to predict~$S$ can be solved just as well with the knowledge of~$X_{2}$ as with the
knowledge of~$X_{1}$.  We refer to Section 2 in~\cite{BROJA13:Quantifying_unique_information} for the details.

Ref. \cite{BROJA13:Quantifying_unique_information} also proposed the following property:
\begin{itemize}
	\item[\textbf{($*$)}] $\gSI$ and $\gUI$ depend only on the marginal distributions $P_{SX_{1}}$ and $P_{SX_{2}}$ of the
	pairs $(S,X_{1})$ and $(S,X_{2})$.
\end{itemize}
This property was in part motivated by \textbf{(BP)}, which also depends only on the channels $S\to X_{1}$ and $S\to X_{2}$ and thus on $P_{SX_{1}}$ and $P_{SX_{2}}$.  Most information decompositions proposed so far satisfy property~($*$).

\section{Extractable Information Measures}
\label{sec:extr-IM}

One can interpret $\ol{SI}$ as a measure of \emph{extractable shared information}. We explain this idea in a more general setting.

For fixed~$k$, let $IM(S;X_1,\ldots,X_k)$ be an arbitrary information measure that measures one aspect of the
information that $X_{1},\dots,X_{k}$ contain about~$S$.  At this point, we do not specify what precisely an
information measure is, except that it is a function that assigns a real number to any joint distributions of
$S,X_{1},\dots,X_{k}$.  The notation is, of course, suggestive of the fact that we mostly think about one of the measures $\gSI$, $\gUI$ or~$\gCI$, in which the first argument plays a special role.  However, $IM$ could also be the mutual information $I(S;X_{1})$,
the entropy $H(S)$, or the coinformation $CoI(S;X_{1},X_{2})$.  
We define the corresponding \emph{extractable} information measure as
\begin{equation}
\label{eq:ext-IM}
\ol{IM}(S;X_1,\ldots,X_k) := \sup_{f}\;IM(f(S);X_1,\ldots,X_k),
\end{equation}
where the supremum runs over all functions $f:\Scal\mapsto\Scal'$ from the domain of $S$ to an arbitrary finite set~$\Scal'$. The intuition is that $\ol{IM}$ is the maximal possible amount of $IM$ one can ``extract'' from $(X_1,\ldots,X_k)$ by transforming~$S$. Clearly, the precise interpretation depends on the interpretation of~$IM$.

This construction has the following general properties:
\begin{enumerate}
	\item 
	Most information measures satisfy  $IM(O;X_1,\ldots,X_k)=0$ when $O$ is a constant random variable.  Thus, in this case, $\ol{IM}(S;X_1,\ldots,X_k)\ge 0$. Thus, for example, even though the coinformation can be negative, the \emph{extractable coinformation} is never negative.
	\item Suppose that $IM$ satisfies left monotonicity.
	Then, $\ol{IM} = IM$. For example, entropy $H$ and mutual information $I$ satisfy left monotonicity, and so $\ol{H}=H$
	and $\ol{I}=I$. Similarly, as shown in~\cite{RBOJ14:Reconsidering_unique_information}, the measure of unique
	information $\oUI$ defined in~\cite{BROJA13:Quantifying_unique_information} satisfies left monotonicity, and so 
	$\ol\oUI=\oUI$.
	\item In fact, $\ol{IM}$ is the smallest left monotonic information measure that is at least as large as~$IM$.
\end{enumerate}

The next result shows that our construction preserves monotonicity properties of the other arguments of $IM$.
It follows that, by iterating this construction, one can construct an information measure that is monotonic in all arguments.
\begin{lemma}
	\label{lem:xIM-RM}
	Let $f_{1},\dots,f_{k}$ be fixed functions.  If $IM$ satisfies $IM(S;f_{1}(X_{1}),\dots,f_{k}(X_{k}))\le
	IM(S;X_{1},\dots,X_{k})$ for all~$S$, then $\dx{IM}(S;f_{1}(X_{1}),\dots,f_{k}(X_{k}))\le
	\dx{IM}(S;X_{1},\dots,X_{k})$ for all~$S$.
\end{lemma}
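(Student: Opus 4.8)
The plan is to unfold the definition~\eqref{eq:ext-IM} of $\dx{IM}$ on both sides and reduce the claimed inequality to the hypothesis by a change of variable in the supremum. Write
\[
\dx{IM}(S;f_{1}(X_{1}),\dots,f_{k}(X_{k})) \;=\; \sup_{g:\Scal\to\Scal'} IM\big(g(S);f_{1}(X_{1}),\dots,f_{k}(X_{k})\big),
\]
where $g$ ranges over all functions from $\Scal$ to an arbitrary finite set. I would fix one such $g$ and bound the corresponding term.

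First I would observe that the hypothesis of the lemma is assumed to hold \emph{for all}~$S$, i.e.\ for every choice of target variable jointly distributed with $(X_{1},\dots,X_{k})$. In particular, applying it with the target variable $g(S)$ in place of $S$ (which is legitimate, since $g(S)$ is again a random variable jointly distributed with $(X_{1},\dots,X_{k})$, hence also with $(f_{1}(X_{1}),\dots,f_{k}(X_{k}))$) yields
\[
IM\big(g(S);f_{1}(X_{1}),\dots,f_{k}(X_{k})\big) \;\le\; IM\big(g(S);X_{1},\dots,X_{k}\big).
\]
The right-hand side is one of the terms appearing in the supremum defining $\dx{IM}(S;X_{1},\dots,X_{k})$, so it is bounded above by $\dx{IM}(S;X_{1},\dots,X_{k})$. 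Since this holds for every $g$, taking the supremum over $g$ on the left-hand side gives $\dx{IM}(S;f_{1}(X_{1}),\dots,f_{k}(X_{k})) \le \dx{IM}(S;X_{1},\dots,X_{k})$, as required.

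The argument is essentially a two-line manipulation of suprema, so there is no real obstacle; the only point that needs a moment's care is the quantifier bookkeeping, namely that "for all~$S$" in the hypothesis is what allows the substitution $S \mapsto g(S)$ inside the supremum. Note also that, because $\Scal$ is finite, each supremum is in fact attained (there are only finitely many functions out of $\Scal$ up to relabeling of the codomain), but this is not needed for the proof.
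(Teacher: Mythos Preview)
Your proof is correct and is essentially the same argument as the paper's: the paper fixes an optimizer $f^{*}$ for the left-hand supremum and then applies the hypothesis with target $f^{*}(S)$, whereas you do the same with an arbitrary $g$ and pass to the supremum at the end. The only cosmetic difference is that the paper uses the (attained) $\argmax$ while you avoid this by working with a generic $g$, which is a slightly cleaner formulation of the identical idea.
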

\begin{proof}
	Let $f^{*} = \mathop{\argmax}_{f} \big\{IM (f(S);f_{1}(X_1),\dots,f_{k}(X_k))\big\}$. Then,
	\begin{multline*}
	\dx{IM}(S;f_{1}(X_1),\dots,f_{k}(X_k))
	= {IM}(f^{*}(S);f_{1}(X_1),\dots,f_{k}(X_k))\\
	\mathop\leq\limits^{{\text{(a)}}}\; {IM}(f^{*}(S);X_1,\dots,X_k)\leq\sup_{f}\;{IM}(f(S);X_1,\dots,X_k)
	=\dx{IM}(S;X_1,\dots,X_k),
	\end{multline*} 
	where (a) follows from the assumptions.
\end{proof}

As a generalization to the construction, instead of looking at ``deterministic extractability,'' one can also look at
``probabilistic extractability'' and replace $f$ by a stochastic matrix.  This leads to the definition
\begin{equation}
\label{eq:ext-IM-prob}
\px{IM}(S;X_1,\ldots,X_k) := \sup_{P_{S'|S}}\;IM(S';X_1,\ldots,X_k),
\end{equation}
where the supremum now runs over all random variables~$S'$ that are independent of $X_{1},\dots,X_{k}$ given~$S$.  The
function $\px{IM}$ is the smallest function bounded from below by~$IM$ that satisfies
\begin{itemize}[leftmargin=1.2cm]
	\item[\textbf{(PLM)}] ${IM}(S;X_1,X_2) \ge {IM}(S';X_1,X_2)$ whenever $S'$ is independent of $X_{1},X_{2}$ given~$S$.
	
	\hfill\mbox{\emph{(probabilistic left monotonicity)}}
\end{itemize}

An example of this construction is the intrinsic conditional information
$I(X;Y\downarrow Z) := \min_{P_{Z'|Z}} I(X;Y|Z')$,
which was defined in~\cite{MaurerWolf97:intrinsic_conditional_MI} to study the secret-key rate, which is the maximal rate at which a secret can be generated by two agents knowing $X$ or $Y$, respectively, such that a third agent who knows~$Z$ has arbitrarily small information about this key.  The $\min$ instead of the $\max$ in the definition implies that $I(X;Y\downarrow Z)$ is ``anti-monotone'' in~$Z$.

In this paper, we restrict ourselves to the deterministic notions, since many of the
properties we want to discuss can already be explained using deterministic extractability.
Moreover, the optimization problem~\eqref{eq:ext-IM} is a finite optimization problem and thus much easier to solve
than {Equation}~\eqref{eq:ext-IM-prob}.

\section{Extractable Shared Information}
\label{sec:extr-SI}

We now specialize to the case of shared information. The first result is that when we apply our construction to a measure of shared information that belongs to a bivariate information decomposition, we again obtain a bivariate information decomposition.
\begin{lemma}
	\label{lem:bivariate-decomposition}
	Suppose that $\gSI$ is a measure of shared information, coming from a nonnegative bivariate information decomposition (satisfying Equations \eqref{eq:bivariate1} to~\eqref{eq:bivariate3}).  Then, $\ol{\gSI}$ defines a nonnegative information decomposition; that is, the derived functions
	\begin{align*}
	\dUI(S;X_1\backslash X_2) & := I(S;X_{1}) - \ol{\gSI}(S;X_1,X_2), \\
	\dUI(S;X_2\backslash X_1) & := I(S;X_{2}) - \ol{\gSI}(S;X_1,X_2), \\
	\text{and }\quad \dCI(S;X_1,X_2) & := I(S;X_{1}X_{2}) - \ol{\gSI}(S;X_1,X_2) - \dUI(S;X_1\backslash X_2) - \dUI(S;X_2\backslash X_1)
	\end{align*}
	are nonnegative. These quantities relate to the original decomposition by
	\begin{align*} 
	a) \;& \ol{\gSI}(S;X_1,X_2) \ge {\gSI}(S;X_1,X_2),\\
	b) \;& \dCI(S;X_1,X_2) \ge \gCI(S;X_1,X_2),\\
	c) \;& \gUI(f^{*}(S);X_1\backslash X_2) \le \dUI(S;X_1\backslash X_2) 
	\le \gUI(S;X_1\backslash X_2),
	\end{align*}
	where $f^{*}$ is a function that achieves the supremum in Equation~\eqref{eq:ext-SI}.
\end{lemma}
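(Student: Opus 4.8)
The plan is to reduce everything to the two inequalities in part~c), together with the already-available fact that $\ol{\gSI}$ satisfies left monotonicity and dominates $\gSI$. First I would fix a joint distribution $P_{SX_1X_2}$ and let $f^{*}:\Scal\to\Scal'$ be a maximizer in~\eqref{eq:ext-SI}, so that $\ol{\gSI}(S;X_1,X_2)=\gSI(f^{*}(S);X_1,X_2)$; existence is not an issue because the optimization is over the finite set of partitions of the finite alphabet $\Scal$. Part~a) is then immediate: taking $f=\mathrm{id}$ in the supremum gives $\ol{\gSI}(S;X_1,X_2)\ge\gSI(S;X_1,X_2)$.

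Next I would establish part~c). For the right-hand inequality, write $\dUI(S;X_1\backslash X_2)=I(S;X_1)-\ol{\gSI}(S;X_1,X_2)\le I(S;X_1)-\gSI(S;X_1,X_2)=\gUI(S;X_1\backslash X_2)$, using part~a) and the decomposition identity~\eqref{eq:bivariate2} for the original measure. For the left-hand inequality I would apply the original decomposition~\eqref{eq:bivariate2} to the variable $f^{*}(S)$ in place of $S$: $\gUI(f^{*}(S);X_1\backslash X_2)=I(f^{*}(S);X_1)-\gSI(f^{*}(S);X_1,X_2)=I(f^{*}(S);X_1)-\ol{\gSI}(S;X_1,X_2)$. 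Since $I(f^{*}(S);X_1)\le I(S;X_1)$ by the data-processing inequality, this is at most $I(S;X_1)-\ol{\gSI}(S;X_1,X_2)=\dUI(S;X_1\backslash X_2)$. The symmetric statement for $X_2$ is identical.

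Nonnegativity of $\dUI$ then follows from the left-hand inequality in c), because $\gUI(f^{*}(S);X_1\backslash X_2)\ge 0$ as the original decomposition is nonnegative; alternatively it follows directly from $\ol{\gSI}(S;X_1,X_2)\le\ol{I}(S;X_1)=I(S;X_1)$, where the last equality uses property~2 of the extractable construction (mutual information is already left monotonic). For $\dCI$, I would substitute the definitions of $\dUI$ into the defining equation of $\dCI$; the two copies of $I(S;X_1)$ and $I(S;X_2)$ combine with the leading $I(S;X_1X_2)$ and the $\ol{\gSI}$ terms to give the clean identity $\dCI(S;X_1,X_2)=I(S;X_1|X_2)+I(S;X_2|X_1)-I(S;X_1X_2)+\ol{\gSI}(S;X_1,X_2)$. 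The same algebra applied to the original decomposition gives $\gCI(S;X_1,X_2)$ as the same expression with $\gSI$ in place of $\ol{\gSI}$, so part~b) follows from part~a) by subtraction, and in particular $\dCI(S;X_1,X_2)\ge\gCI(S;X_1,X_2)\ge 0$ establishes the last required nonnegativity.

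The only genuinely non-routine point is the left-hand inequality of part~c), and the crux there is recognizing that the optimal $f^{*}$ for $\ol{\gSI}$ also lower-bounds $\dUI$ precisely because $I(f^{*}(S);X_1)\le I(S;X_1)$; once the data-processing step is in place the rest is bookkeeping with~\eqref{eq:bivariate1}--\eqref{eq:bivariate3}. I would also remark that $f^{*}$ need not be simultaneously optimal for the unique or complementary parts, which is why only inequalities, not equalities, appear in b) and~c).
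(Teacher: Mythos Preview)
Your proof is correct and follows essentially the same approach as the paper: part~a) from the identity function, part~c) from~\eqref{eq:bivariate2} combined with the data-processing inequality, and part~b) by rewriting $\dCI$ as $\ol{\gSI}-CoI$ (your expression $I(S;X_1|X_2)+I(S;X_2|X_1)-I(S;X_1X_2)+\ol{\gSI}$ is exactly this) and comparing with $\gCI=\gSI-CoI$. The only cosmetic difference is the order of presentation and that the paper writes the coinformation identity directly rather than via conditional mutual informations.
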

\begin{proof}
	\begin{align*}
	a)\text{   }& \ol{\gSI}(S;X_1,X_2) \ge {\gSI}(S;X_1,X_2) \ge 0,\\
	b)\text{  }& \dCI(S;X_1,X_2)=\ol{\gSI}(S;X_1,X_2)-CoI(S;X_1,X_2)
	\ge \gSI(S;X_1,X_2) - CoI(S;X_1,X_2)\\ 
	&\ge \gCI(S;X_1,X_2) \ge 0, \\
	c)\text{ }& \dUI(S;X_1\backslash X_2)=I(S;X_{1})-\ol{\gSI}(S;X_1,X_2)
	\le I(S;X_{1}) - {\gSI}(S;X_1,X_2)=\gUI(S;X_1\backslash X_2),\\
	\text{ }& \dUI(S;X_1\backslash X_2)= I(S;X_{1})-\ol{\gSI}(S;X_1,X_2)
	\ge I(f^{*}(S);X_{1})-\gSI(f^{*}(S);X_1,X_2)\\
	& = \gUI(f^{*}(S);X_1\backslash X_2) \ge 0,
	\end{align*}
	where we have used the data processing inequality.
\end{proof}
\begin{lemma}
	\label{lem:xSI-RM}
	\begin{enumerate}
		\item If $\gSI$ satisfies \textbf{($*$)}, then $\dx{\gSI}$ also satisfies \textbf{($*$)}.
		\item If $\gSI$ is right monotonic, then $\dx{\gSI}$ is also right monotonic.
	\end{enumerate}
\end{lemma}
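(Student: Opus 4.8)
The plan is to handle the two parts separately; both are short, and the second is essentially a corollary of Lemma~\ref{lem:xIM-RM}.

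For part~1, I would exploit that replacing $S$ by $f(S)$ only affects the $S$-coordinate of every marginal. Concretely, for any $f:\Scal\to\Scal'$ the distribution $P_{f(S)X_1}$ is the push-forward of $P_{SX_1}$ under the map $(s,x_1)\mapsto(f(s),x_1)$, and likewise $P_{f(S)X_2}$ is a function of $P_{SX_2}$ alone. Hence, if $\gSI$ obeys \textbf{($*$)}, each term $\gSI(f(S);X_1,X_2)$ appearing in the supremum~\eqref{eq:ext-SI} already depends only on $P_{SX_1}$ and $P_{SX_2}$. The index set of that supremum — functions out of $\Scal$, up to relabeling the codomain — is itself determined by $P_S$, hence by $P_{SX_1}$. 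Taking the supremum over $f$, I conclude that $\dx{\gSI}(S;X_1,X_2)$ depends only on $P_{SX_1}$ and $P_{SX_2}$, i.e. $\dx{\gSI}$ satisfies \textbf{($*$)}. The one point to state carefully is that this is a well-definedness claim: two joint distributions with the same pair of marginals yield the same value of $\dx{\gSI}$, because term by term the relevant marginals $P_{f(S)X_i}$ coincide.

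For part~2, I would simply invoke Lemma~\ref{lem:xIM-RM} with $IM=\gSI$. Right monotonicity of $\gSI$ is exactly the assertion that, for every fixed tuple $(f_1,\dots,f_k)$, the hypothesis $IM(S;f_1(X_1),\dots,f_k(X_k))\le IM(S;X_1,\dots,X_k)$ holds for all $S$. Lemma~\ref{lem:xIM-RM} then delivers the same inequality with $\dx{\gSI}$ in place of $\gSI$ for that tuple, and since the tuple was arbitrary, $\dx{\gSI}$ is right monotonic.

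I do not anticipate a genuine obstacle: part~2 is immediate from the already-proved lemma, and part~1 is a bookkeeping argument about push-forward distributions. If anything, the only delicate point is notational — being explicit that "depends only on $P_{SX_1}$ and $P_{SX_2}$'' is a statement about well-definedness of the supremum, and that the domain of the optimization is itself pinned down by those marginals.
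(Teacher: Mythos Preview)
Your proposal is correct and matches the paper's own proof, which simply states that (1) is direct and that (2) follows from Lemma~\ref{lem:xIM-RM}. Your part~1 merely unpacks what ``direct'' means, and your part~2 is exactly the paper's argument.
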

\begin{proof}
	(1) is direct, and (2) follows from Lemma~\ref{lem:xIM-RM}.
\end{proof}

Without further assumptions on $\gSI$, we cannot say much about when $\ol\gSI$ vanishes.  However, the condition that $\dUI$ vanishes has strong consequences.
\begin{lemma}
	\label{lem:zero-TUIbar}
	Suppose that $\dUI(S;X_1\backslash X_2)$ vanishes, and let $f^{*}$ be a function that achieves the supremum in {Equation}~\eqref{eq:ext-SI}.  Then, there is a Markov chain \mbox{$X_{1}$ --- $f^{*}(S)$ --- $S$}.  Moreover, $\gUI(f^{*}(S);X_1\backslash X_2)=0$.
\end{lemma}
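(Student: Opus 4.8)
The plan is to reduce everything to the single numerical identity $I(S;X_1)=I(f^{*}(S);X_1)$, and then read off both conclusions from it. The ``moreover'' part is essentially free: since $\gSI$ comes from a nonnegative bivariate decomposition, Lemma~\ref{lem:bivariate-decomposition}c) applies and gives the chain $0\le\gUI(f^{*}(S);X_1\backslash X_2)\le\dUI(S;X_1\backslash X_2)=0$, so $\gUI(f^{*}(S);X_1\backslash X_2)=0$. (Equivalently, once the identity above is known, $\gUI(f^{*}(S);X_1\backslash X_2)=I(f^{*}(S);X_1)-\gSI(f^{*}(S);X_1,X_2)=I(S;X_1)-I(S;X_1)=0$ via Equation~\eqref{eq:bivariate2} applied to the target $f^{*}(S)$.)

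To prove the identity, I would sandwich $I(f^{*}(S);X_1)$ between two copies of $I(S;X_1)$. On one side, the hypothesis $\dUI(S;X_1\backslash X_2)=0$ together with the definition of $\dUI$ and the fact that $f^{*}$ attains the supremum in~\eqref{eq:ext-SI} gives
\begin{equation*}
I(S;X_1)=\ol{\gSI}(S;X_1,X_2)=\gSI(f^{*}(S);X_1,X_2),
\end{equation*}
and Equation~\eqref{eq:bivariate2} applied with target $f^{*}(S)$, combined with $\gUI\ge 0$, yields $\gSI(f^{*}(S);X_1,X_2)\le I(f^{*}(S);X_1)$. On the other side, $f^{*}(S)$ is a deterministic function of $S$, so the data processing inequality gives $I(f^{*}(S);X_1)\le I(S;X_1)$. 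Chaining these, $I(S;X_1)\le I(f^{*}(S);X_1)\le I(S;X_1)$, so all inequalities are equalities; in particular $I(S;X_1)=I(f^{*}(S);X_1)$.

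Finally, I would extract the Markov chain in the standard way. Because $f^{*}(S)$ is a function of $S$, the chain $X_1$ --- $S$ --- $f^{*}(S)$ holds trivially, hence $I(X_1;f^{*}(S)\mid S)=0$ and $I(X_1;S,f^{*}(S))=I(X_1;S)$. Expanding the same quantity the other way,
\begin{equation*}
I(X_1;S)=I(X_1;S,f^{*}(S))=I(X_1;f^{*}(S))+I(X_1;S\mid f^{*}(S)),
\end{equation*}
and substituting $I(X_1;f^{*}(S))=I(X_1;S)$ forces $I(X_1;S\mid f^{*}(S))=0$, i.e.\ $X_1$ --- $f^{*}(S)$ --- $S$ is a Markov chain.

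There is no serious obstacle here; the argument is just the data processing inequality plus the defining identity~\eqref{eq:bivariate2} of the decomposition. The only point worth stating carefully is that we are entitled to write $\ol{\gSI}(S;X_1,X_2)=\gSI(f^{*}(S);X_1,X_2)$, which needs the supremum in~\eqref{eq:ext-SI} to be attained — and it is, since $f$ ranges over the finitely many (up to relabeling) functions from $\Scal$ into finite sets, so the construction is in fact a maximum, consistent with the hypothesis that such an $f^{*}$ exists.
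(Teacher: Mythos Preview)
Your proof is correct and follows essentially the same approach as the paper: both hinge on the single chain $I(S;X_1)=\ol{\gSI}(S;X_1,X_2)=\gSI(f^{*}(S);X_1,X_2)\le I(f^{*}(S);X_1)\le I(S;X_1)$, then read off both conclusions from the forced equalities. You simply spell out in more detail why equality in the data processing inequality yields the Markov chain, and you note the alternative shortcut via Lemma~\ref{lem:bivariate-decomposition}c) for the ``moreover'' part.
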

\begin{proof}
	Suppose that $\dUI(S;X_1\backslash X_2)=0$. Then,
	$I(S;X_{1})=\ol{\gSI}(S;X_1,X_2)=\gSI(f^{*}(S);X_1,X_2)\le I(f^{*}(S);X_{1}) \le I(S;X_{1})$. Thus, the data
	processing inequality holds with equality. This implies that $X_{1} - f^{*}(S) - S$ is a Markov chain. The identity
	$\gUI(f^{*}(S);X_1\backslash X_2)=0$ follows from the same chain of inequalities.
\end{proof}

\begin{theorem}
	\label{thm:dui-no-Blackwell}
	If $\gUI$ has the Blackwell property, then $\dUI$ does not have the Blackwell property.
\end{theorem}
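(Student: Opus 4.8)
I want to show that the Blackwell property for $\gUI$ forces $\dUI$ to fail it. The natural strategy is to exhibit a single joint distribution $P_{SX_1X_2}$ where the two notions disagree: I will produce an example in which $\dUI(S;X_1\backslash X_2) = 0$ but the Blackwell garbling condition for $\dUI$ fails, or conversely. Given Lemma~\ref{lem:zero-TUIbar}, the cleanest route is the first: pick an example where $\dUI(S;X_1\backslash X_2)=0$, so that by Lemma~\ref{lem:zero-TUIbar} there is a Markov chain $X_1 - f^*(S) - S$ with $f^*$ a nontrivial coarsening, yet the channel $S\to X_1$ is \emph{not} a garbling of $S\to X_2$, so the Blackwell property (if it held for $\dUI$) would be violated.

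\textbf{Key steps.} First, I would recall the mechanism behind $\dUI = 0$: it happens exactly when $\ol\gSI(S;X_1,X_2) = I(S;X_1)$, i.e.\ when coarse-graining $S$ to some $f^*(S)$ loses none of the $X_1$-information while maximizing shared information. So I need an example where $X_1$ really only ``sees'' a function $f^*(S)$ of $S$ (making the Markov chain $X_1 - f^*(S) - S$ hold and $I(S;X_1) = I(f^*(S);X_1)$), where $X_2$ sees all of $S$ including the part $X_1$ misses, and where on the reduced variable $f^*(S)$ the two channels $X_1, X_2$ are ``equivalent'' enough that $\gSI(f^*(S);X_1,X_2) = I(f^*(S);X_1)$ — for instance take $X_1$ and $X_2$ to carry the same information about $f^*(S)$, e.g.\ $X_1 = X_2$ restricted to that aspect. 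Second, I would check that for this distribution the channel $S\to X_1$ is genuinely \emph{not} a degradation of $S\to X_2$: since $X_2$ determines $S$ fully (or at least determines coordinates of $S$ that $X_1$ does not), no garbling of $X_2$ can reproduce $X_1$ unless... — here I must be careful, because if $X_2$ is \emph{more} informative than $X_1$ in the Blackwell sense then $S\to X_1$ \emph{is} a garbling and the Blackwell property would give $\gUI(S;X_1\backslash X_2)=0$ consistently. So the example must instead be one where $\dUI$ vanishes but $\gUI$ (with Blackwell) does \emph{not} vanish, meaning $S\to X_1$ is not a garbling of $S\to X_2$. That forces the roles: I want $X_1$ to have some ``unique'' aspect at the level of $S$ that disappears once we coarse-grain to $f^*(S)$. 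Concretely, something like $S = (S_a, S_b)$ with $X_1 = S_a$, $X_2 = S_b$, and then take $f^*$ to be constant (or project away both) so that $\ol\gSI$ picks up... no — if $X_1$ and $X_2$ are independent given nothing, $\ol\gSI$ could still be forced positive only if the base measure $\gSI$ is positive on some coarsening; so the construction will depend on plugging in a specific base $\gSI$.

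\textbf{The main obstacle.} The crux is that the statement is about an \emph{arbitrary} $\gUI$ with the Blackwell property, so I cannot just compute with one favorite measure; I need a distribution-level argument, or I need to use that $\gUI$ Blackwell plus $\dUI$ Blackwell jointly imply a contradiction via Lemma~\ref{lem:zero-TUIbar}. The clean argument: suppose both hold. Take any $P_{SX_1X_2}$ with $\dUI(S;X_1\backslash X_2)=0$. By Lemma~\ref{lem:zero-TUIbar}, $X_1 - f^*(S) - S$ and $\gUI(f^*(S);X_1\backslash X_2)=0$; by the Blackwell property of $\gUI$, the channel $f^*(S)\to X_1$ is a garbling of $f^*(S)\to X_2$. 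Now I must derive that $S\to X_1$ is a garbling of $S\to X_2$ (so that Blackwell for $\dUI$ is consistent), OR exhibit a case where it is not — and the theorem asserts the latter must occur, so I must construct $P_{SX_1X_2}$ where $\dUI = 0$ (which needs a lower bound $\ol\gSI \ge I(S;X_1)$, hence needs $\gSI(f^*(S);X_1,X_2)$ large for some coarsening) yet $S\to X_1$ is not a garbling of $S\to X_2$. Finding a distribution that simultaneously makes the base $\gSI$ large on a coarsening (independent of which valid base measure is used — here I can invoke that any $\gSI$ satisfies self-redundancy and monotonicity, so $\gSI(f^*(S);X_1,X_2)\ge$ something computable, e.g.\ $= I(f^*(S);X_1)$ when $X_1$ is a function of $X_2$ at the coarsened level) while keeping the un-coarsened channels Blackwell-incomparable is the delicate part; I expect the example to have $X_2$ a refinement of $X_1$ on the $f^*(S)$-relevant part but $X_1$ carrying extra noise-free information about a different part of $S$ that $X_2$ lacks, so that neither channel garbles the other at the level of $S$ but one does at the level of $f^*(S)$.
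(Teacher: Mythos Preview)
Your overall strategy is exactly the paper's: exhibit a joint distribution and a function $f$ with (i)~$X_1$\,--\,$f(S)$\,--\,$S$ Markov, (ii)~$f(S)\to X_1$ a garbling of $f(S)\to X_2$, and (iii)~$S\to X_1$ \emph{not} a garbling of $S\to X_2$; then (i) and (ii) force $\dUI(S;X_1\backslash X_2)=0$, while (iii) says the Blackwell criterion fails there. You correctly isolate these three conditions by the end of your sketch.

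The genuine gap is your concern that ``the construction will depend on plugging in a specific base $\gSI$.'' It does not, and resolving this is the only missing idea. The hypothesis that $\gUI$ has the Blackwell property is exactly what closes the argument uniformly: apply it in the direction ``garbling $\Rightarrow\gUI=0$'' to condition~(ii) to obtain $\gUI(f(S);X_1\backslash X_2)=0$, hence $\gSI(f(S);X_1,X_2)=I(f(S);X_1)$, which equals $I(S;X_1)$ by~(i). Since $\gSI(f'(S);X_1,X_2)\le I(f'(S);X_1)\le I(S;X_1)$ for every $f'$, the function $f$ attains the supremum defining $\ol\gSI$ and $\dUI(S;X_1\backslash X_2)=0$, for \emph{any} decomposition whose $\gUI$ is Blackwell. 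Your fallback of invoking the Williams--Beer monotonicity clause with ``$X_1$ a function of $X_2$ at the coarsened level'' is both unnecessary and self-defeating: if $X_1$ were a function of $X_2$, then $S\to X_1$ would automatically be a garbling of $S\to X_2$, destroying~(iii). What remains is to actually write down a distribution satisfying (i)--(iii); the paper uses $S\in\{0,1,2\}$ with $f(S)=\AND(X_1,X_2)$, arranged so that the two channels $f(S)\to X_i$ are identical (giving (ii) trivially) while, on the event $f(S)=0$, the refinement $S\in\{0,1\}$ is conditionally independent of $X_1$ but correlated with $X_2$; condition~(iii) is then verified by a direct matrix computation.
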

\begin{proof}
	As shown in the example in the appendix,
	there exist random variables $S$, $X_{1}$, $X_{2}$ and a function $f$ that satisfy
	\begin{enumerate}
		\item $S$ and $X_{1}$ are independent given~$f(S)$.
		\item The channel $f(S)\to X_{1}$ is a garbling of the channel $f(S)\to X_{2}$.
		\item The channel $S\to X_{1}$ is not a garbling of the channel $S\to X_{2}$.
	\end{enumerate}
	We claim that $f$ solves the optimization problem~\eqref{eq:ext-SI}.  Indeed, for an arbitrary function~$f'$,
	\begin{align*}
	\gSI(f'(S);X_1,X_2)\le I(f'(S);X_{1})\le I(S;X_{1}) = I(f(S);X_{1}) = \gSI(f(S);X_1,X_2).
	\end{align*}
	Thus, $f$ solves the maximization problem~\eqref{eq:ext-SI}.
	
	If $\gUI$ satisfies the Blackwell property, then (2) and (3) imply $\gUI(f(S);X_1\backslash X_2) = 0$ and $\gUI(S;X_1\backslash X_2) > 0$.  On the other hand,
	\begin{multline*}
	\dUI(S;X_{1}\setminus X_{2}) 
	= I(S;X_{1}) - \ol{\gSI}(S;X_1,X_2) 
	= I(S;X_{1}) - \gSI(f(S);X_1,X_2) \\
	= I(S;X_{1}) - I(f(S);X_{1}) + \gUI(f(S);X_1\backslash X_2)
	= 0.
	\end{multline*}
	Thus, $\dUI$ does not satisfy the Blackwell property.
\end{proof}

\begin{corollary}
	There is no bivariate information decomposition in which $\gUI$ satisfies the Blackwell property and $\gSI$ satisfies
	left monotonicity.
\end{corollary}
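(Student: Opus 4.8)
The corollary follows almost immediately from Theorem~\ref{thm:dui-no-Blackwell}. The plan is to argue by contradiction: suppose we have a bivariate information decomposition, with shared information~$\gSI$ and unique information~$\gUI$, in which $\gUI$ has the Blackwell property \textbf{(BP)} and $\gSI$ satisfies left monotonicity \textbf{(LM)}.

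First I would observe that if $\gSI$ is left monotonic, then the construction~\eqref{eq:ext-SI} leaves it unchanged, i.e.\ $\ol{\gSI} = \gSI$. This is exactly item~(2) in the list of general properties of the extractable construction in Section~\ref{sec:extr-IM}: $\ol{IM} = IM$ whenever $IM$ is left monotonic. Consequently the derived unique information $\dUI$ from Lemma~\ref{lem:bivariate-decomposition} coincides with the original $\gUI$, since $\dUI(S;X_1\backslash X_2) = I(S;X_1) - \ol{\gSI}(S;X_1,X_2) = I(S;X_1) - \gSI(S;X_1,X_2) = \gUI(S;X_1\backslash X_2)$.

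Now I would invoke Theorem~\ref{thm:dui-no-Blackwell}: since $\gUI$ has the Blackwell property, $\dUI$ does \emph{not} have the Blackwell property. But $\dUI = \gUI$, so $\gUI$ does not have the Blackwell property, contradicting our assumption. Hence no such decomposition exists.

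There is really no main obstacle here, since all the work has already been done in Theorem~\ref{thm:dui-no-Blackwell}; the only point requiring a moment's care is the identification $\ol{\gSI} = \gSI$ under left monotonicity, which is immediate from the minimality characterization of $\ol{\gSI}$ (it is the smallest left monotonic function dominating $\gSI$, and if $\gSI$ is already left monotonic it is its own smallest such dominator). One should also note that Theorem~\ref{thm:dui-no-Blackwell} is stated for the specific $\dUI$ built from $\gSI$, and here that $\dUI$ is literally the $\gUI$ of the assumed decomposition, so the conclusion transfers directly.
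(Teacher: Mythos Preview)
Your proposal is correct and follows essentially the same approach as the paper: from left monotonicity of $\gSI$ conclude $\ol{\gSI}=\gSI$, hence $\dUI=\gUI$, and then apply Theorem~\ref{thm:dui-no-Blackwell} to obtain a contradiction with the Blackwell property. The paper's proof is just a more terse version of exactly this argument.
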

\begin{proof}
	If $\gSI$ satisfies left monotonicity, then $\ol\gSI = \gSI$.  Thus, $\gUI = \dUI$ cannot satisfy the Blackwell
	property by Theorem~\ref{thm:dui-no-Blackwell}.
\end{proof}

\section{Left Monotonic Information Decompositions}
\label{sec:lm-ids}

Is it possible to have an extractable information decomposition?  More precisely, is it possible to have an information decomposition in which all information measures are left monotonic?  The obvious strategy of starting with an arbitrary information decomposition and replacing each partial information measure by its extractable analogue does not work, since this would mean increasing all partial information measures (unless they are extractable already), but then their sum would also increase.  For example, in the bivariate case, when $\gSI$ is replaced by a larger function $\ol\gSI$, then $\gUI$ needs to be replaced by a smaller function, due to the constraints~\eqref{eq:bivariate2} and~\eqref{eq:bivariate3}.

As argued in~\cite{RBOJ14:Reconsidering_unique_information}, it is intuitive that $\gUI$ be left monotonic. As argued above (and in~\cite{BROJ13:Shared_information}), it is also desirable that $\gSI$ be left monotonic. The intuition for synergy is much less clear. 
In the following, we restrict our focus to the bivariate case and study the implications of requiring both $\gSI$ and $\gUI$ to be left monotonic.
Proposition \ref{prop:lm-decomposition} gives bounds on the corresponding $\gSI$ measure.
\begin{proposition}
	\label{prop:lm-decomposition}
	Suppose that $\gSI$, $\gUI$ and $\gCI$ define a bivariate information decomposition, and suppose that $\gSI$ and
	$\gUI$ are left monotonic.  Then,
	\begin{equation}
	\label{eq:lm-decomposition-inequality}
	\gSI(f(X_{1},X_{2});X_1,X_2) \le I(X_{1};X_{2})
	\end{equation}
	for any function~$f$.
\end{proposition}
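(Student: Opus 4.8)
The plan is to reduce the general claim to the special case of the full copy target and then to peel that case down to a marginal target. The key observation is that $f(X_{1},X_{2})$ is a function of $\COPY(X_{1},X_{2})=(X_{1},X_{2})$, so left monotonicity of $\gSI$ gives at once $\gSI(f(X_{1},X_{2});X_1,X_2)\le\gSI(\COPY(X_{1},X_{2});X_1,X_2)$. It therefore suffices to prove $\gSI(\COPY(X_{1},X_{2});X_1,X_2)\le I(X_{1};X_{2})$, which is the heart of the matter.

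Write $C:=\COPY(X_{1},X_{2})$. First I would use $I(C;X_{1})=H(X_{1})$ together with the decomposition identity~\eqref{eq:bivariate2} applied with target $C$ to get $\gSI(C;X_1,X_2)=H(X_{1})-\gUI(C;X_1\setminus X_2)$. Next I would lower-bound the unique term using left monotonicity of $\gUI$: since $X_{1}$ is a function of $C$ (the first-coordinate projection), $\gUI(C;X_1\setminus X_2)\ge\gUI(X_1;X_1\setminus X_2)$, and~\eqref{eq:bivariate2} applied with target $X_{1}$ gives $\gUI(X_1;X_1\setminus X_2)=H(X_{1})-\gSI(X_1;X_1,X_2)$. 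Substituting, the two $H(X_{1})$ terms cancel, leaving $\gSI(C;X_1,X_2)\le\gSI(X_1;X_1,X_2)$. Finally, \eqref{eq:bivariate3} applied with target $X_{1}$ reads $I(X_{1};X_{2})=\gSI(X_1;X_1,X_2)+\gUI(X_1;X_2\setminus X_1)$, so nonnegativity of $\gUI$ gives $\gSI(X_1;X_1,X_2)\le I(X_{1};X_{2})$. Chaining the three inequalities proves the proposition.

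I do not expect a genuine obstacle here: once one sees that both hypotheses should be spent on passing first from $f(X_{1},X_{2})$ down to $\COPY(X_{1},X_{2})$ (via left monotonicity of $\gSI$) and then from $\COPY(X_{1},X_{2})$ down to the marginal $X_{1}$ (via left monotonicity of $\gUI$), the remainder is a two-line computation using only the identities~\eqref{eq:bivariate2}--\eqref{eq:bivariate3} and nonnegativity of $\gUI$; notably, neither~\eqref{eq:bivariate1}, the Williams--Beer axioms, nor self-redundancy are needed. The only point requiring care is bookkeeping the directions of the monotonicity inequalities. As a sanity check, one may note that $\gSI(C;X_1,X_2)-\gCI(C;X_1,X_2)=CoI(C;X_1,X_2)=I(X_{1};X_{2})$, so the bound just proven forces $\gCI(\COPY(X_{1},X_{2});X_1,X_2)=0$ under these hypotheses — consistent with the Identity-type intuition without assuming~\textbf{(Id)}.
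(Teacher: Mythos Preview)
Your argument is correct and is essentially the paper's own proof: reduce to $\COPY$ via left monotonicity of $\gSI$, rewrite $\gSI(\COPY;X_1,X_2)$ through~\eqref{eq:bivariate2}, pass to target $X_1$ via left monotonicity of $\gUI$, and finish with~\eqref{eq:bivariate3}. The only cosmetic difference is that the paper asserts the equality $\gSI(X_1;X_1,X_2)=I(X_1;X_2)$ (which indeed follows once one also invokes nonnegativity of $\gCI$, since $\gCI(X_1;X_1,X_2)+\gUI(X_1;X_2\setminus X_1)=I(X_1;X_1X_2)-I(X_1;X_1)=0$), whereas you use only the inequality $\gSI(X_1;X_1,X_2)\le I(X_1;X_2)$ from nonnegativity of $\gUI$; either suffices.
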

Before proving the proposition, let us make some remarks. %
Inequality~\eqref{eq:lm-decomposition-inequality} is related to the identity axiom.  Indeed, it is easy to
derive {Inequality}~\eqref{eq:lm-decomposition-inequality} from the identity axiom and from the assumption that $\gSI$ is left monotonic. %
Although Inequality~\eqref{eq:lm-decomposition-inequality} may not seem counterintuitive at first sight, none of the
information decompositions proposed so far satisfy this property (the function~$I_{\curlywedge}$
from~\cite{GCJEC14:Common_randomness} satisfies left monotonicity and has been proposed as a measure of shared information, but it does not lead to a nonnegative information decomposition).

\begin{proof}
	If $\gSI$ is left monotonic, then
	\begin{align*}
	\gSI(f(X_{1},X_{2});X_1,X_2) &\le \gSI(\textsc{Copy}(X_{1},X_{2});X_1,X_2) \\
	&= I(\textsc{Copy}(X_{1},X_{2});X_{1}) - \gUI(\textsc{Copy}(X_{1},X_{2});X_{1}\backslash X_{2}).
	\end{align*}
	If $\gUI$ is left monotonic, then
	\begin{equation*}
	\gUI(\textsc{Copy}(X_{1},X_{2});X_{1}\backslash X_{2}) \ge \gUI(X_{1};X_{1}\backslash X_{2}) 
	= I(X_{1};X_{1}) - \gSI(X_{1};X_1,X_2).
	\end{equation*}
	Note that $I(X_{1};X_{1}) = H(X_{1}) = I(\textsc{Copy}(X_{1},X_{2});X_{1})$ and $$\gSI(X_{1};X_1,X_2)=I(X_{1};X_{2})-\gUI(X_{1};X_{2}\backslash X_{1})=I(X_{1};X_{2}).$$ Putting these inequalities together, we obtain $\gSI(f(X_{1},X_{2});X_1,X_2) \le I(X_{1};X_{2})$. \qedhere
\end{proof}

\medskip

\section{Examples}
\label{sec:exmpls}

In this section, we apply our construction to Williams and Beer's measure, $I_{\min}$~\cite{WilliamsBeer:Nonneg_Decomposition_of_Multiinformation}, and to the bivariate measure of shared information, $\oSI$, proposed in~\cite{BROJA13:Quantifying_unique_information}.  

First, we make some remarks on how to compute the extractable information measure (under the assumption that one knows how to compute the underlying information measure itself). The optimization problem~\eqref{eq:ext-SI} is a discrete optimization problem. The search space is the set of functions from the support $\Scal$ of~$S$ to some finite set~$\Scal'$.  For the information measures that we have in mind, we may restrict to surjective functions~$f$, since the information measures only depend on events with positive probabilities.  Thus, we may restrict to sets $\Scal'$ with~$|\Scal'|\le|\Scal|$.
Moreover, the information measures are invariant under permutations of the alphabet~$\Scal$.  Therefore, the only thing that matters about~$f$ is which elements from~$\Scal$ are mapped to the same element in~$\Scal'$.  Thus, any function $f:\Scal\to\Scal'$ corresponds to a partition of~$\Scal$, where $s,s'\in\Scal$ belong to the same block if and only if~$f(s)=f(s')$, and it suffices to look at all such partitions. The number of partitions of a finite set~$\Scal$ is the \emph{Bell number}~$B_{|\Scal|}$.

The Bell numbers increase super-exponentially, and for larger sets~$\Scal$, the search space of the optimization problem~\eqref{eq:ext-SI} becomes quite large.  For smaller problems, enumerating all partitions in order to find the maximum is still feasible.  For larger problems, one would need a better understanding about the optimization problem.
For reference, some Bell numbers include:
\begin{center}
	\begin{tabular}{crrrrrrrr}
		$n$     & 3 & 4 & 6 & 10 \\
		\midrule{}
		$B_{n}$ & 5 & 15 & 203 & 115975
	\end{tabular}.
\end{center}
As always, symmetries may help, and so in the \textsc{Copy} example discussed below, where $|\Scal|=4$, it suffices to study six functions instead of $B_{4} = 15$.

We now compare the measure $\dx{I}_{\min}$, an extractable version of Williams and Beer's measure $I_{\min}$ (see {Equation}~\eqref{eq:Imin} above),
to the measure $\dx{\oSI}$, an extractable version of the measure $\oSI$ proposed in~\cite{BROJA13:Quantifying_unique_information}.  
For the latter, we briefly recall the definitions. 
Let $\Delta$ be the set of all joint distributions of random variables $(S,X_1,X_2)$ with given state spaces $\Scal$, $\Xcal_{1}$, $\Xcal_{2}$. Fix $P=P_{SX_1X_2} \in \Delta$. Define $\Delta_{P}$ as the set of all distributions $Q_{SX_1X_2}$ that preserves the marginals of the pairs $(S,X_1)$ and $(S,X_2)$, that is,
\begin{align*}
\Delta_{P} \defeq \big\{Q_{SX_1X_2} \in\Delta: Q_{SX_1}=P_{SX_1}&,\text{ } Q_{SX_2}=P_{SX_2},\forall\text{ } (S,X_1,X_2)\in\Delta\big\}.
\end{align*}
Then, define the functions
\begin{align*}
\oUI(S;X_1\backslash X_2) &\defeq \min_{Q\in\Delta_{P}} I_{Q}(S;X_1|X_2),\\
\oUI(S;X_2\backslash X_1) &\defeq \min_{Q\in\Delta_{P}} I_{Q}(S;X_2|X_1),\\
\oSI(S;X_1,X_2) &\defeq \max_{Q\in\Delta_{P}} CoI_{Q}(S;X_1,X_2),\\
\oCI(S;X_1,X_2) &\defeq I(S;X_1X_2) - \min_{Q\in\Delta_{P}} I_{Q}(S;X_1X_2),
\end{align*}
where the index $Q$ in $I_{Q}$ or $CoI_{Q}$ indicates that the corresponding quantity is computed with respect to the joint distribution~$Q$. 
The decomposition corresponding to $\oSI$ satisfies the Blackwell property and the identity axiom~\cite{BROJA13:Quantifying_unique_information}. 
$\oUI$ is left monotonic, but $\oSI$ is not~\cite{RBOJ14:Reconsidering_unique_information}. In particular, $\dx{\oSI}\neq\oSI$.  $\oSI$ can be characterized as the smallest measure of shared information that satisfies property~\textbf{($*$)}.  Therefore, $\dx{\oSI}$ is the smallest left monotonic measure of shared information that satisfies property~\textbf{($*$)}.

Let $\Ycal=\Zcal=\{0,1\}$ and let $X_1$, $X_2$ be independent uniformly distributed random variables. Table \ref{tab:default} collects values of shared information about $f(X_{1},X_{2})$ for various functions~$f$ (in bits).
\begin{table}[t]
	\centering
	\caption{Shared information about $f(X_{1},X_{2})$ for various functions~$f$ (in bits).}
	\label{tab:default}
	\begin{tabular}{lcccc}
		\toprule
		$f$ &        $I_{\min}$ & $\dx{I}_{\min}$ & $\oSI$ & $\dx{\oSI}$ \\
		\midrule
		\textsc{Copy} & 1 & 1 & 0 & $\nicefrac{1}{2}$ \\
		\textsc{And}/\textsc{Or} & $\nicefrac{3}{4}\log\nicefrac{4}{3}$ & $\nicefrac{3}{4}\log\nicefrac{4}{3}$ & $\nicefrac{3}{4}\log\nicefrac{4}{3}$ & $\nicefrac{3}{4}\log\nicefrac{4}{3}$\\
		\textsc{Xor} & 0 & 0 & 0 & 0 \\
		\textsc{Sum} & $\nicefrac{1}{2}$ & $\nicefrac{1}{2}$ & $\nicefrac{1}{2}$ & $\nicefrac{1}{2}$ \\
		$X_{1}$      & 0 & 0 & 0 & 0 \\
		$f_{1}$      & $\nicefrac{1}{2}$ & $\nicefrac{1}{2}$ & 0 & 0\\
		\bottomrule
	\end{tabular}
\end{table}
The function $f_{1}:\{00,01,10,11\}\to\{0,1,2\}$ is defined as
\begin{equation*}
f_{1}(X_{1},X_{2}) :=
\begin{cases}
X_{1}, & \text{ if }X_{2}=1, \\
2, &  \text{ if }X_{2}=0.
\end{cases}
\end{equation*}
The \textsc{Sum} function is defined as $f(X_1,X_2) := X_1 + X_2$.
Table \ref{tab:default} contains (up to symmetry) all possible non-trivial functions~$f$. 
The values for the extractable measures are derived from the values of the corresponding non-extractable measures.  
Note that the values for the extractable versions differ only for~$\textsc{Copy}$ from the original ones. In these examples, $\dx{I}_{\min}=I_{\min}$, but as shown in~\cite{BROJ13:Shared_information}, $I_{\min}$ is not left monotonic in general.

\medskip

\section{Conclusions}

We introduced a new measure of shared information that satisfies the left monotonicity property with respect to local operations on the target variable. Left monotonicity corresponds to the idea that local processing will remove information in the target variable and thus should lead to lower values of measures which quantify information about the target variable. Our measure fits the bivariate information decomposition framework; that is, we also obtain corresponding measures of unique and synergistic information.
However, we also have shown that left monotonicity for the shared information contradicts the Blackwell property of the unique information, which limits the value of a left monotonic measure of shared information for information decomposition. 

We also presented an alternative interpretation of the construction used in this paper. Starting from an arbitrary measure of shared information $\gSI$ (which need not be left monotonic), we interpret the left monotonic measure $\ol{\gSI}$ as the amount of shared information that can be extracted from $S$ by local processing.

Our initial motivation for the construction of $\ol{\gSI}$ was the question to which extent shared information originates from the redundancy between the predictors $X_1$ and $X_2$ or is created by the mechanism that generated~$S$.
These two different flavors of redundancy were called \emph{source redundancy} and \emph{mechanistic redundancy}, respectively, in~\cite{HarderSalgePolani2013:Bivariate_redundancy}.
While $\ol{\gSI}$ cannot be used to completely disentangle source and mechanistic redundancy, it can be seen as a measure of the maximum amount of redundancy that can be created from~$S$ using a (deterministic) mechanism.
In this sense, we believe that it is an important step forward towards a better understanding of this problem and related questions.

\section*{Appendix: Counterexample in Theorem~\ref{thm:dui-no-Blackwell}}  
\label{app:cex}
Consider the joint distribution
\begin{center}
	\begin{tabular}{cccc|c}
		$f(s)$ & $s$&$x_{1}$&$x_{2}$& $P_{f(S)SX_1X_2}$ \\
		\hline
		0 & 0 & 0 & 0 & 1/4 \\
		0 & 1 & 0 & 1 & 1/4 \\
		0 & 0 & 1 & 0 & 1/8 \\
		0 & 1 & 1 & 0 & 1/8 \\
		1 & 2 & 1 & 1 & 1/4 \\
	\end{tabular}
\end{center}
and the function $f:\{0,1,2\}\to\{0,1\}$ with $f(0)=f(1)=0$ and $f(2)=1$. Then, $X_{1}$ and $X_{2}$ are independent uniform binary random variables, and $f(S) = \textsc{And}(X_{1},X_{2})$. In addition, $S-f(S)-X_1$ is a Markov chain. By symmetry, the joint distributions of the pairs $(f(S), X_{1})$ and $(f(S), X_{2})$ are identical, and so the two channels $f(S)\to X_{1}$ and $f(S)\to X_{2}$ are identical, and, hence, trivially, one is a garbling of the other. However, one can check that the channel $S\to X_{1}$ is not a garbling of the channel $S\to X_{2}$.



This example is discussed in more detail in~\cite{Everyone17:BadBlackwell}.


\bibliographystyle{IEEEtran}
\bibliography{Info}

\end{document}